\newcommand{\figdraft}{false}%
\newcommand{\figfile}[1]{#1}%
\theoremstyle{plain}%
\newtheorem{theorem}{Theorem}[]%
\newtheorem{corollary}[theorem]{Corollary}%
\newtheorem{lemma}[theorem]{Lemma}%
\newtheorem{remark}[theorem]{Remark}%
\definecolor{colorGreen}{rgb}{0.,0.67,0}
\definecolor{colorRed}{rgb}{0.67,0.,0}
\definecolor{colorBlue}{rgb}{0.,0.,0.67}
\newcommand{\dblidx}[2]{_{#1,\,#2}}
\newcommand{\dblprm}[2]{{#1,\,#2}}
\newcommand{\iu}{\mathtt{i}}
\newcommand{\mhexp}[1]{{{\mathtt{e}}^{#1}}}
\newcommand{\sgn}{\mathrm{sgn}}
\newcommand{\fspace}[1]{{\mathsf{#1}}}
\newcommand{\fspaceL}{\fspace{L}}
\newcommand{\fspaceC}{\fspace{C}}
\newcommand{\ol}[1]{{\overline{#1}}}
\newcommand{\phase}{{\varphi}}
\newcommand{\Rset}{{\mathbb{R}}}
\newcommand{\Zset}{{\mathbb{Z}}}
\newcommand{\Nset}{{\mathbb{N}}}
\newcommand{\cointerval}[2]{[#1,\,#2)}%
\newcommand{\oointerval}[2]{(#1,\,#2)}%
\newcommand{\ccinterval}[2]{[#1,\,#2]}%
\newcommand{\DO}[1]{{O\at{#1}}}
\newcommand{\loc}{{\rm loc}}
\newcommand{\tdots}{{...}}%
\newlength{\mhpicDwidth}
\newlength{\mhpicDvsep}
\newlength{\mhpicDhsep}
\newlength{\mhpicPwidth}
\newlength{\mhpicPvsep}
\newlength{\mhpicPhsep}
\newlength{\mhpicWhsep}
\newcommand{\at}[1]{{\left({#1}\right)}}
\newcommand{\nat}[1]{(#1)}
\newcommand{\bat}[1]{{\big(#1\big)}}
\newcommand{\ul}[1]{\underline{#1}}
\newcommand{\bigpar}{\par\quad\newline\noindent}
\newcommand{\abs}[1]{\left|{#1}\right|}
\newcommand{\dint}[1]{\,\mathrm{d}#1}
\newcommand{\eps}{{\varepsilon}}
\newcommand{\ka}{{\kappa}}
\newcommand{\la}{{\lambda}}
\newcommand{\si}{{\sigma}}
\newcommand{\calD}{\mathcal{D}}
\newcommand{\calE}{\mathcal{E}}
\newcommand{\calF}{\mathcal{F}}
\newcommand{\calG}{\mathcal{G}}
\newcommand{\calH}{\mathcal{H}}
\newcommand{\calI}{\mathcal{I}}
\newcommand{\calM}{\mathcal{M}}
\newcommand{\calN}{\mathcal{N}}
\begin{document}%
%
%
\title{Heteroclinic  standing waves in defocussing DNLS equations%
\\\emph{Variational approach via energy minimization }}%
\date{\today}%
\author{%
Michael Herrmann%
\thanks{
    Oxford Centre for Nonlinear PDE (OxPDE),
    {\tt{michael.herrmann@maths.ox.ac.uk}}
}%
}%
\maketitle
%
%
%
\begin{abstract}%
We study heteroclinic standing waves (dark solitons) in discrete nonlinear Schr\"{o}dinger
equations with defocussing nonlinearity. Our main result is a quite elementary existence
proof for waves with monotone and odd profile, and relies on minimizing an appropriately
defined energy functional. We also study the continuum limit and the numerical approximation
of standing waves.
\end{abstract}%
%
%

\quad\newline\noindent%
\begin{minipage}[t]{0.15\textwidth}%
Keywords: %
\end{minipage}%
\begin{minipage}[t]{0.8\textwidth}%
\emph{discrete nonlinear Schr\"{o}dinger equation (DNLS)}, \\%
\emph{nonlinear lattice waves, variational methods}, \\%
\emph{heteroclinic standing waves, dark solitons} %
\end{minipage}%
\medskip
\newline\noindent
\begin{minipage}[t]{0.15\textwidth}%
MSC (2000): %
\end{minipage}%
\begin{minipage}[t]{0.8\textwidth}%
37K60, 
47J30, 
78A40  

\end{minipage}%
%
%
%
%
%
%
%

\section{Introduction}
%
Discrete nonlinear Schr\"{o}dinger equations (DNLS) are prototypical models for the dynamics
of coupled oscillators and have a broad range of application in nonlinear optics and particle
physics, see \cite{KRB01,EJ03,Kev:DNLS:Ch1,Por09}. A special focus in the mathematical
analysis of such systems lies on the investigation of coherent structures (travelling or
standing waves, breathers, kinks) as these describe the fundamental modes for energy
localization and wave propagation in nonlinear discrete media.
\par
This paper concerns heteroclinic standing waves for the one-dimensional DNLS equation
\begin{align}%
\label{Eqn:DNLS}%
\iu\dot{A}_j-\beta\at{A_{j+1}+A_{j-1}-2A_j}+\Psi^\prime\nat{\abs{A_j}^2}A_j=0.
\end{align}
Here $A_j=A_j\at{t}$ denotes the complex amplitude of the oscillator $j\in{Z}$ at time $t$
and $\beta$ is the coupling constant. The potential function $\Psi$ describes the energy of a
single oscillator system and is often  assumed to be monomial. In what follows we consider an
infinite number of oscillators and set either $Z=\Zset$ (\emph{on-site setting}) or
$Z=\tfrac{1}{2}+\Zset$ (\emph{inter-site setting}). The difference between both settings will
be explained below.
\par
The lattice system \eqref{Eqn:DNLS} is an infinite dimensional system of Hamiltonian ODEs
(well-posedness results are given in \cite{PKP05,GP10}) and possesses the conserved
quantities
\begin{align}
\label{Eqn:ConsLaws}%
\calH\at{A}=\sum\limits_{j}\Psi\bat{\abs{A}_j^2}+\beta\sum\limits_{j}\abs{A_{j+1}-A_j}^2
,\qquad
\calN\at{A}=\sum\limits_j{}\abs{A_j}^2.
\end{align}
More precisely, $\calH$ is the Hamiltonian of \eqref{Eqn:DNLS}, whereas $\calN$ gives the
power of a state $A$ and stems from the gauge symmetry, that is the invariance of
\eqref{Eqn:DNLS} under the transformation $A_j\at{t}\mapsto\mhexp{\iu\phase}A_j\at{t}$ with
$\phase\in\Rset$.
\bigpar
Standing waves are exact solutions to \eqref{Eqn:DNLS} that satisfy
$A_j\at{t}=\mhexp{\iu\sigma{}t}u_j\at{t}$ with \emph{frequency} $\sigma\in\Rset$ and
real-valued \emph{profile} $u=\at{u_j}_{j\in{Z}}\in\ell^\infty\at{Z}$. Standing waves can be
regarded as relative equilibria with respect to the gauge symmetry and satisfy
\begin{align}
\label{Eqn:StandingWave0}
\sigma{u}_j=-\beta\at{u_{j+1}+u_{j-1}-2{u_j}}+\Psi^\prime\nat{u_j^2}u_j.
\end{align}
Heteroclinic waves connect different asymptotic states via
\begin{align}
\label{Eqn:AsmptoticStates}
u_j\xrightarrow{j\to\pm\infty}{u_{\pm\infty}}
\end{align}
and correspond to so called \emph{dark solitons}. Other types of standing waves are periodic
waves with $u_j=u_{j+N}$ for some $N$, and \emph{bright solitons}, which are homoclinic with
$\lim_{j\to\pm\infty}u_j=0$.
\par
In this paper we aim in establishing the existence of standing
waves in the set
\begin{align*}
\calM=\{%
u\in\ell^\infty\at{Z}\;:\;u_{-j}=-u_j,\;-u_\infty\leq{}u_{j}\leq{u_{j+1}}\leq{u_\infty}
\;\;\;\forall\;j\in{Z}\},
\end{align*}
which consists of all profiles that are odd, increasing, and take values in
$\ccinterval{-u_\infty}{u_\infty}$. In particular, we suppose $0<u_\infty=-u_{-\infty}$.
Notice that we have $u_0=0$ for all on-site waves, whereas inter-site waves generically
satisfy $u_{-1/2}<0<u_{1/2}$.%
\bigpar
For convex $\Psi$ one distinguishes between the \emph{focussing} and the \emph{defocussing}
case, which (in our notation) correspond to $\beta<0$ and $\beta>0$, respectively. Although
both cases are linked via the \emph{staggering transformation}
$u_j\rightsquigarrow\at{-1}^ju_j$ they describe different physical situations. In the context
of standing waves it is well established, see \cite{Kev:DNLS:Ch2} and \cite{Kev:DNLS:Ch5},
that the most fundamental (that means most stable) waves are homoclinic for $\beta<0$ but
heteroclinic for $\beta>0$, respectively. In what follows we solely consider the defocussing
case $\beta>0$ and construct heteroclinic \emph{single-pulse} waves by \emph{minimizing} an
energy functional. The analogues in the focussing case are homoclinic single-pulse waves,
which can be constructed by \emph{constrained maximization}, see \cite{Her10PreC}.
\par
Several methods have been used to prove the existence of standing waves in DNLS equations.
\emph{Continuation methods} were introduced by MacKay and Aubry \cite{MA94,Aub97} and have
been proven powerful for both theoretical considerations and numerical computations, see for
instance \cite{KL09}. Continuation method rely on the observation that
\eqref{Eqn:StandingWave0} can be solved explicitly in both the anti-continuum limit
$\beta\to0$ and the continuum limit $\beta\to\infty$. During the last years, however, there
has been a growing interest in other approaches to the existence problem for standing waves.
We refer to \cite{PR05}, which exploits spatial dynamics and centre manifold reduction, and
to the variational methods in \cite{PZ01,PR08,ZP09,ZL09}, which rely on critical point
techniques (linking theorems, Nehari manifold).
\bigpar
We now summarize the main idea in our variational existence proof. At first we notice that
\eqref{Eqn:StandingWave0} and \eqref{Eqn:AsmptoticStates} couple the frequency $\si$ to the
asymptotic states via
\begin{align*}
\si=\Psi^\prime\at{u^2_{\infty}}.
\end{align*}
Secondly, we introduce the function
\begin{align*}
F\at{\eta}=\Psi\at{\eta^2}-\Psi\at{u_\infty^2}-\Psi^\prime\at{u_\infty^2}\at{\eta^2-u_\infty^2},
\end{align*}
and rewrite the standing wave equation as
\begin{align*}
F^\prime\at{u_j}-2\beta\at{u_{j+1}+u_{j-1}-2u_j}=0.
\end{align*}
The key observation is that each standing wave is a critical point of the energy functional
$\calE$ with
\begin{align}
\label{Eqn:StandingWave1}
\calE\at{u}=\calF\at{u}+\beta\calD\at{u}
,\qquad
\calF\at{u}=\sum\limits_jF\at{u_j}
,\qquad
\calD\at{u}=\sum\limits_j\at{u_{j+1}-u_j}^2.
\end{align}
The energy $\calE$ is naturally related to the conserved quantities from \eqref{Eqn:ConsLaws}.
In fact, on a formal level we find
\begin{align*}
\calE\at{u}=\calH\at{u}-\sigma\calN\at{u}-
\calH\at{\bar{u}}+\si\calN\at{\bar{u}}+\beta\calD\at{\bar{u}}
\end{align*}
where $\bar{u}$ is an arbitrary reference profile with $\bar{u}_j^2=u_\infty^2$ for all $j$,
but notice that the $\calN$- and $\calH$-terms are infinite if $u$ satisfies
\eqref{Eqn:AsmptoticStates}. $\calE\at{u}$, however, is well defined as long as $u$
approaches the asymptotic states sufficiently fast. In fact, using
$F\at{\pm{u_\infty}}=F^\prime\at{\pm{u_\infty}}=0$ we find that both $\calF\at{u}$ and
$\calD\at{u}$ are finite if the sequence $j\mapsto{u_j}-u_\infty\sgn{j}$ belongs to
$\ell^2\at{Z}$.
\par
\par
Our strategy for proving the existence of standing waves is to show that $\calE$ attains its
minimum on $\calM$ by using the direct method from the calculus of variations. Afterwards we
show that each minimizer satisfies \eqref{Eqn:StandingWave0} as it is strictly increasing.
Our existence result for energy minimizing waves can be summarized as follows.
\begin{theorem}
\label{Intro:Result1}%
Let $u_\infty>0$ be given and let $\Psi$ be twice continuously differentiable on
$\ccinterval{0}{x_\infty}$ with $x_\infty=u_\infty^2$. Moreover, suppose that
\begin{align}
\label{Intro:Result1.Eqn1}
\quad\Psi^{\prime\prime}\at{x_\infty}>0
\quad\text{and}\quad
\quad\Psi\at{x}>\Psi\at{x}+\Psi^\prime\at{x_\infty}\at{x-x_\infty}
\quad\text{for all}\quad0\leq{x}<x_\infty.
\end{align}
Then, for each $\beta>0$ the functional $\calE$ attains its minimum on $\calM$. Each
minimizer $u\in\calM$ is strictly increasing, converges exponentially to ${\pm}u_{\infty}$ as
$j\to\pm\infty$, and solves the standing wave equation \eqref{Eqn:StandingWave0} with
frequency $\sigma=\Psi^\prime\at{x_\infty}$.
\end{theorem}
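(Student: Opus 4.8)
The plan is to apply the direct method of the calculus of variations on the admissible set $\calM$, restricted to profiles with finite energy. First I would fix the natural function space: let $X$ be the set of sequences $u$ on $Z$ such that $j\mapsto u_j-u_\infty\sgn{j}\in\ell^2\at{Z}$, equipped with the obvious Hilbert structure, and set $\calM_{\mathrm{fin}}=\calM\cap X$. On this set both $\calF$ and $\calD$ are finite by the remarks preceding the theorem (using $F\at{\pm u_\infty}=F'\at{\pm u_\infty}=0$ and $F''\at{u_\infty}=4u_\infty^2\Psi''\at{x_\infty}>0$, so $F$ is locally quadratic near the asymptotic values). The hypothesis \eqref{Intro:Result1.Eqn1} guarantees $F\at{\eta}\geq 0$ for $\abs{\eta}\leq u_\infty$ with equality only at $\eta=\pm u_\infty$, hence $\calE\geq 0$ on $\calM_{\mathrm{fin}}$ and $\inf_{\calM_{\mathrm{fin}}}\calE$ is a well-defined nonnegative number; one checks it is not $+\infty$ by exhibiting one competitor (e.g. a profile that is $-u_\infty$ for $j<0$, $0$ at $j=0$, and $u_\infty$ for $j>0$ in the on-site case, with the analogous jump in the inter-site case).

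Next I would take a minimizing sequence $\at{u^{\at n}}_n\subset\calM_{\mathrm{fin}}$ and extract a limit. The uniform bound $\calD\at{u^{\at n}}\leq C/\beta$ plus the pointwise constraint $-u_\infty\leq u^{\at n}_j\leq u_\infty$ gives, via a diagonal argument, a subsequence converging pointwise to some $u^\star$ with $u^\star\in\calM$ (oddness, monotonicity, and the bounds pass to the pointwise limit). The bound on $\calF\at{u^{\at n}}$ together with the local coercivity of $F$ near $\pm u_\infty$ forces the tails of $u^{\at n}$ to approach $\pm u_\infty$ uniformly in $n$ in an $\ell^2$ sense, so the limit actually lies in $\calM_{\mathrm{fin}}$. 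Lower semicontinuity of $\calE$ along the sequence then follows from Fatou's lemma applied termwise to the nonnegative series $\calF$ and $\calD$ (each summand $F\at{u^{\at n}_j}$ and $\at{u^{\at n}_{j+1}-u^{\at n}_j}^2$ converges), giving $\calE\at{u^\star}\leq\liminf_n\calE\at{u^{\at n}}=\inf_{\calM_{\mathrm{fin}}}\calE$, so $u^\star$ is a minimizer.

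It remains to extract the Euler--Lagrange information, and this is where the main work lies. A minimizer over $\calM$ need only satisfy a variational \emph{inequality} because of the ordering constraint $u_j\leq u_{j+1}$; one must show the constraint is never active, i.e.\ the minimizer is \emph{strictly} increasing, so that interior variations are admissible in both directions and \eqref{Eqn:StandingWave0} holds as an equation. I expect this to be the crux of the argument: I would argue by contradiction, supposing $u^\star_{k}=u^\star_{k+1}$ for some $k$, and then construct an explicit admissible perturbation that strictly decreases $\calE$ — intuitively, smoothing out a "flat step" lowers the Dirichlet part $\calD$ without raising $\calF$ too much, exploiting the strict inequality in \eqref{Intro:Result1.Eqn1}. (A symmetry and monotonicity reduction, treating the cases of a flat step at $j=0$ separately from a flat step away from the origin, will likely be needed; the oddness constraint couples the variations at $j$ and $-j$.) Once strict monotonicity is known, testing $\calE'\at{u^\star}$ against arbitrary finitely-supported odd perturbations yields $F'\at{u^\star_j}-2\beta\at{u^\star_{j+1}+u^\star_{j-1}-2u^\star_j}=0$, which is exactly \eqref{Eqn:StandingWave0} with $\sigma=\Psi'\at{x_\infty}$ after unravelling the definition of $F$. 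Finally, exponential convergence to $\pm u_\infty$ follows by linearizing the recurrence about $u_\infty$: since $F''\at{u_\infty}>0$, the linearized difference equation has a spectral gap, and a standard comparison/sub--supersolution argument on the tail (using monotonicity and the bound $u^\star_j<u_\infty$) upgrades $\ell^2$ decay to exponential decay.
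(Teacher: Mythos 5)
Your first half (the direct method) is sound and coincides with the paper's argument: weak$\star$ compactness of $\calM$ in $\ell^\infty$ is the same as your pointwise-convergence/diagonal extraction, and lower semicontinuity is Fatou applied termwise; your detour through $\calM_{\mathrm{fin}}$ and the claimed uniform-in-$n$ $\ell^2$ tail control are unnecessary (and not obviously needed to be proved), since Fatou already gives $\calF\at{u^\star}<\infty$, which for a monotone odd profile forces $u^\star_j\to\pm u_\infty$. The genuine gap is in the step you yourself call the crux: you never construct the energy-decreasing perturbation at a flat step, and the heuristic you offer misidentifies the mechanism. In the paper this step is a first-order computation with the G\^ateaux derivative \eqref{Rem:MProps.Eqn1}, split into two cases. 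For an \emph{interior} plateau $u_{j_1-1}<u_{j_1}=\cdots=u_{j_2}<u_{j_2+1}$ one lowers $u_{j_1}$ and raises $u_{j_2}$; the $\calF$-contributions cancel to first order simply because $u_{j_1}=u_{j_2}$, and the strict decrease comes entirely from the coupling term, namely $\calG\at{u}_{j_1}-\calG\at{u}_{j_2}=\beta\at{u_{j_1}-u_{j_1-1}}+\beta\at{u_{j_2+1}-u_{j_2}}>0$; the strict inequality in \eqref{Intro:Result1.Eqn1}, which you invoke, plays no role here. The case you do not address at all is a plateau at the extreme value, $u_j=u_\infty$ for all $j\ge j_0$ with $u_{j_0-1}<u_\infty$, where only a one-sided variation (pushing $u_{j_0}$ below $u_\infty$) is admissible: there $\calF$ \emph{increases}, and what saves the argument is that this increase is second order because $F^\prime\at{u_\infty}=0$ (precisely the normalization $\sigma=\Psi^\prime\at{x_\infty}$), while the coupling term decreases at first order, $\calG\at{u}_{j_0}=\beta\at{u_\infty-u_{j_0-1}}>0$. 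Without this case you cannot even conclude $u_j<u_\infty$ for all $j$, which your own tail argument later uses; you should also note that oddness is preserved by symmetrizing each perturbation (or, for a plateau through $0$, by choosing it odd), which only doubles the first-order decrease.

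A smaller but real incompleteness: the exponential decay is likewise only gestured at ("standard comparison argument"). The paper proves it by studying $w_j=1-u_j$ and the ratio $\kappa_j=w_j/w_{j-1}$, which satisfies $\kappa_{j+1}=2+\delta_j-1/\kappa_j$ with $\delta_j=F^{\prime\prime}\at{\xi_j}/\at{2\beta}$, and showing $\kappa_j\to\kappa_\infty=e^{-\lambda}$ with $\lambda$ given by \eqref{Eqn:DecayRate}; your linearization-plus-sub/supersolution plan is plausible and would give the same one-sided bounds, but as written it is not a proof. So: right strategy, same skeleton as the paper, but the two computations that actually carry the theorem (the plateau variations, including the boundary-value plateau, and the tail estimate) are missing.
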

We proceed with some remarks concerning the assumptions and assertions of Theorem
\ref{Intro:Result1}.
\begin{enumerate}
\item
Theorem \ref{Intro:Result1} holds in both the on-site and inter-site setting. Moreover,
via $u_j\rightsquigarrow-u_{j}$ it also provides also the existence of standing waves
with decreasing profile.
\item
Suppose that $\Psi$ is strictly convex on $\cointerval{0}{\infty}$ with
$\Psi^\prime\at{0}=0$. Then we have $\Psi\at{x}-\Psi\at{y}\geq\Psi^\prime\at{y}\at{x-y}$
for all $x,y\geq0$ and Theorem \ref{Intro:Result1} guarantees the existence of standing
waves with arbitrary $u_\infty$. Furthermore, there exist standing waves for non-convex
$\Psi$, see the examples in \S\ref{sec:num}.
\item
Assumption \eqref{Intro:Result1.Eqn1}$_2$ precisely means that $F$ is positive on the
interval $\oointerval{-u_\infty}{u_\infty}$. This condition implies
$F^{\prime\prime}\at{{u_\infty}}\geq0$ and is sharp in the following sense. Suppose there
exists $0\leq{u_\ast}\leq{u_\infty}$ such that $F\at{u_\ast}<0$. Then the energy $\calE$
is unbounded from below and global minimizers of $\calE$ can therefore not exist. See
also the discussion in \S\ref{sec:num}.
\item
Condition \eqref{Intro:Result1.Eqn1}$_1$ is equivalent to
$F^{\prime\prime}\at{{u_\infty}}>0$ and guarantees that each standing wave approaches its
asymptotic states exponentially. For $F^{\prime\prime}\at{{u_\infty}}=0$ one can still
prove the existence of energy minimizing waves (using some limit procedure), but since
$\calE$ has no minimizer for $F^{\prime\prime}\at{{u_\infty}}<0$ these waves are
non-generic and have an algebraic tail. 
\item
The standing waves from Theorem \ref{Intro:Result1} are so called single-pulse dark
solitons. From continuation results we know that there also exists an infinite number of
multi-pulse dark solitons corresponding to non-monotone solutions to
\eqref{Eqn:StandingWave0} and \eqref{Eqn:AsmptoticStates}, see for instance
\cite{Kev:DNLS:Ch5}. However, multi-pulse waves are expected to be unstable and we
conjecture that they correspond to genuine saddle points of $\calE$. A detailed
investigation of this issue is left for future research.
\end{enumerate}
The proof of Theorem \ref{Intro:Result1} is given in \S\ref{sec:Waves.1} and
\S\ref{sec:Waves.2}. In \S\ref{sec:Waves.3} we establish an approximation result which allows
to compute standing waves on finite index sets and provides the base for the numerical
simulations in \S\ref{sec:num}. Moreover, in \S\ref{sec:Waves.Limit} we study the continuum
limit of standing waves with fixed asymptotic states. To this end we introduce a scaling
small parameter $\eps>0$ and scale \eqref{Eqn:StandingWave1} by $j\rightsquigarrow\eps{j}$
and $\beta\rightsquigarrow\beta/\eps^2$. Then we proof that the standing waves converge as
$\eps\to0$ to a heteroclinic solution of
\begin{align}
\label{Eqn:LimitProblem}
2\beta{u}^{\prime\prime}\at\xi=-F^\prime\bat{u\at{\xi}}
,\quad\lim_{j\to\pm\infty}u\at\xi=\pm{u_\infty}.
\end{align}
%
%
%
\section{Existence and properties of standing waves }\label{sec:Waves}
%
In order to prove Theorem \ref{Intro:Result1} we first observe that \eqref{Eqn:StandingWave0}
is invariant under the scaling
\begin{align*}
u_j\mapsto\eta{u_j}
,\qquad%
\sigma\mapsto{\tau}\sigma
,\qquad%
\beta\mapsto{\tau}\beta
,\qquad%
\Psi\at{x}\mapsto{\tau}{\eta}^{-2}\Psi\at{\eta^2x}+\delta
\end{align*}
with arbitrary $\eta,\,\tau,\,\delta\in\Rset$. From now on we therefore assume that
\begin{align*}
\si=\Psi\at{1}=\Psi^\prime\at{1}=1,\qquad
u_{\pm\infty}=\pm1,\qquad
F\at{\eta}=\Psi\at{\eta^2}-\eta^2,
\end{align*}
and study the \emph{normalized} standing wave equation
\begin{align} \label{Eqn:StandingWave}
F^\prime\at{u_j}-2\beta\at{u_{j+1}+u_{j-1}-2u_j}=0
,\qquad\lim_{j\to\pm\infty}u_j=\pm1.
\end{align}
%
%
%
%
%
\subsection{Standing waves as minimizers of $\calE$}\label{sec:Waves.1}
%
Our assumptions on $\Psi$ ensure that $\calE$ has nice properties on $\calM$. Recall that
$\calM$ is compact with respect to the weak$\star$ topology in $\ell^\infty\at{Z}$.
\begin{lemma}
\label{Rem:MProps}%
$\calE$ is non-negative and weakly$\star$ lower semi-continuous on $\calM$. Moreover,
$\calE\at{u}<\infty$ implies $\lim\limits_{j\to\pm\infty}u_j=\pm{1}$ and
\begin{align}
\label{Rem:MProps.Eqn1}%
\lim_{t\to0}\tfrac{1}{t}\at{\calE\at{u+t{v}}-\calE\at{u}}
=\sum_{j\in{Z}}v_j\,\calG\at{u}_j
\end{align}
for all $v\in\ell^1\at{Z}$, where $\calG\at{u}\in\ell^\infty\at{Z}$ with
\begin{align*}
\calG\at{u}_j=\Psi^\prime\at{u_j^2}u_j-u_j-\beta\at{u_{j+1}+u_{j-1}-2u_j}
\end{align*}
is the G\^{a}teaux derivative of $\calE$ in $u$.
\end{lemma}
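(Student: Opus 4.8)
The plan is to establish the four assertions --- non-negativity, weak$\star$ lower semicontinuity, the implication $\calE(u)<\infty\Rightarrow u_j\to\pm1$, and the G\^{a}teaux derivative formula \eqref{Rem:MProps.Eqn1} --- essentially one at a time, exploiting throughout that elements of $\calM$ are monotone and bounded, so that partial sums are monotone in the truncation parameter and all limits below reduce to monotone convergence. Non-negativity is immediate: the hypothesis \eqref{Intro:Result1.Eqn1}$_2$ in normalized form says exactly $F(\eta)=\Psi(\eta^2)-\eta^2>0$ for $\abs{\eta}<1$, and $F(\pm1)=0$, so $F(u_j)\geq0$ for every $u\in\calM$ (recall $-1\le u_j\le 1$); together with $\calD(u)=\sum_j(u_{j+1}-u_j)^2\geq0$ this gives $\calE(u)=\calF(u)+\beta\calD(u)\geq0$.

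Next I would treat the implication $\calE(u)<\infty\Rightarrow u_j\to\pm1$. Since $u\in\calM$ is monotone and bounded, the limits $u_{\pm\infty}\deq\lim_{j\to\pm\infty}u_j$ exist; the only thing to rule out is $\abs{u_{\pm\infty}}<1$. If, say, $\lim_{j\to+\infty}u_j=c<1$, then $u_j\leq c$ for all $j$, and by continuity of $F$ and positivity of $F$ on $[0,c]$ there is $\delta>0$ with $F(\eta)\geq\delta$ for $\eta\in[0,c]$; hence $\calF(u)\geq\sum_{j\geq0}F(u_j)=+\infty$, contradicting $\calE(u)<\infty$ (the negative-$j$ side is symmetric). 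For weak$\star$ lower semicontinuity: if $u^{(n)}\to u$ weak$\star$ in $\ell^\infty(Z)$ with $u^{(n)}\in\calM$, then $u\in\calM$ ($\calM$ is weak$\star$-closed, as stated before the lemma) and $u^{(n)}_j\to u_j$ pointwise for every fixed $j$. For each finite window $[-K,K]$ the map $u\mapsto\sum_{\abs{j}\le K}F(u_j)+\beta\sum_{\abs{j}<K}(u_{j+1}-u_j)^2$ is continuous in the pointwise topology (finite sum of continuous functions of finitely many coordinates), hence $\liminf_n\calE(u^{(n)})\geq\sum_{\abs{j}\le K}F(u_j)+\beta\sum_{\abs{j}<K}(u_{j+1}-u_j)^2$ because every term of $\calE(u^{(n)})$ is non-negative; letting $K\to\infty$ and using monotone convergence on the right gives $\liminf_n\calE(u^{(n)})\geq\calE(u)$.

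Finally, the G\^{a}teaux derivative. Fix $u\in\calM$ with $\calE(u)<\infty$ and $v\in\ell^1(Z)$. Write $\calE(u+tv)-\calE(u)=\sum_j\bat{F(u_j+tv_j)-F(u_j)}+\beta\sum_j\bat{(\De u_j+t\De v_j)^2-(\De u_j)^2}$ with $\De w_j\deq w_{j+1}-w_j$. The quadratic (lattice-Dirichlet) part is exact: divided by $t$ it equals $2\beta\sum_j\De u_j\,\De v_j+t\beta\sum_j(\De v_j)^2$, the first sum converging absolutely since $\De u\in\ell^2$ (a consequence of $\calD(u)<\infty$) and $\De v\in\ell^1\subset\ell^2$, the second being $\DO{t}$; a discrete summation by parts rewrites $2\beta\sum_j\De u_j\,\De v_j=-2\beta\sum_j v_j(u_{j+1}+u_{j-1}-2u_j)$, with no boundary term because $v$ has summable tails and $\De u\in\ell^2$. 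For the $F$-part I would use $F(u_j+tv_j)-F(u_j)=\int_0^t F'(u_j+sv_j)\,ds$ together with local Lipschitz bounds on $F'$ on a fixed compact interval containing all $u_j$ and all $u_j+sv_j$ for $\abs{s}\le\abs{t}\le 1$ (possible since $u$ is bounded and $v\in\ell^\infty$): this gives $\abs{F(u_j+tv_j)-F(u_j)-tF'(u_j)v_j}\le C t^2\abs{v_j}^2$, so $\tfrac1t\sum_j\bat{F(u_j+tv_j)-F(u_j)}\to\sum_j F'(u_j)v_j$, the limit series converging absolutely because $F'$ is bounded on the relevant compact set and $v\in\ell^1$ (dominated convergence on $Z$). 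Recalling $F'(\eta)=2\eta\Psi'(\eta^2)-2\eta$ in normalized form, the $F$- and Dirichlet-contributions combine to $\sum_j v_j\bat{\Psi'(u_j^2)u_j-u_j-\beta(u_{j+1}+u_{j-1}-2u_j)}=\sum_j v_j\,\calG(u)_j$, which is exactly \eqref{Rem:MProps.Eqn1}; boundedness of $\calG(u)$ in $\ell^\infty(Z)$ follows from boundedness of $u$, continuity of $\Psi'$ on $[0,1]$, and $\De u\in\ell^2\subset\ell^\infty$. The only mildly delicate point, and the one I would be most careful about, is the interchange of limit and summation in the $F$-part and the justification that the summation-by-parts boundary term vanishes; both are handled by the $\ell^1$-summability of $v$ (plus a factor-$\tfrac12$ trick of splitting $\De u\,\De v$ symmetrically if one wants to avoid any tail estimate on $u$ altogether), so no genuine obstacle arises --- the lemma is essentially bookkeeping once monotonicity of $\calM$ is used to make all tail sums monotone.
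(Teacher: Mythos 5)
Your proposal is correct and follows essentially the same route as the paper's proof: non-negativity from $F\geq 0$, the contradiction argument using positivity of $F$ away from $\pm1$ for the claim $\calE(u)<\infty\Rightarrow u_j\to\pm1$, pointwise convergence of weak$\star$ convergent sequences combined with a Fatou-type argument for lower semicontinuity (your truncation-and-monotone-limit step is exactly Fatou's lemma for the counting measure, which is what the paper invokes), and a direct expansion for the G\^{a}teaux derivative, which the paper dismisses as ``a direct computation'' and you merely spell out (summation by parts plus dominated convergence via $v\in\ell^1$).
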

\begin{proof}
Let $u\in\calM$ be given. Then we have $\calE\at{u}\geq\calF\at{u}\geq0$ due to the
assumption on $F$, and the monotonicity of $u$ implies
$\lim_{j\to\pm\infty}u_j=\pm{}u_\infty$ for some $u_\infty\in\ccinterval{0}{1}$. Assuming
$u_\infty<1$ we find $F\at{u_j}\geq\tfrac{1}{2}F\at{u_\infty}>0$ for almost all $j$, and
hence $\calE\at{u}\geq\calF\at{u}=\infty$. Moreover, \eqref{Rem:MProps.Eqn1} follows from a
direct computation. Finally, we consider a sequence $\at{u_n}_{n\in\Nset}\subset\calM$, with
$u_n=\at{u_\dblprm{n}{j}}_{j\in{Z}}$ for all $n$, that converges weakly$\star$ in
$\ell^\infty\at{Z}$ to some limit $u\in\calM$. Then we have $u\dblidx{n}{j}\to{u_j}$ for all
$j$, and Fatou's Lemma gives $\calE\at{u}\leq\liminf_{n\to\infty}\calE\at{u_n}$.
\end{proof}
We are now able to prove our main result.
\begin{theorem}
\label{Theo:ExistenceOfWaves}%
$\calE$ attains its minimum on $\calM$. Moreover, each minimizer is strictly increasing and
satisfies the standing wave equation \eqref{Eqn:StandingWave}.
\end{theorem}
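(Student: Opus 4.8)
The plan is to use the direct method from the calculus of variations, exploiting the compactness of $\calM$ established before Lemma~\ref{Rem:MProps}, and then to prove the Euler--Lagrange equation and strict monotonicity by a separate argument using the structure of $\calM$. Concretely, I would first check that $\calE$ is not identically $+\infty$ on $\calM$: a simple test profile, e.g. $u_j=\tanh(\beta^{-1/2}j)$ or even a piecewise-linear ramp from $-1$ to $1$, makes $\calD$ finite and, since $F\geq0$ is continuous with a double zero at $\pm1$ and the profile approaches $\pm1$ exponentially (or at least fast enough), makes $\calF$ finite. Hence $m\deq\inf_{\calM}\calE<\infty$. Take a minimizing sequence $(u_n)\subset\calM$. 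By weak$\star$ compactness of $\calM$ in $\ell^\infty(Z)$ there is a subsequence converging weakly$\star$ to some $u\in\calM$, and by the weak$\star$ lower semicontinuity from Lemma~\ref{Rem:MProps} we get $\calE(u)\leq\liminf_n\calE(u_n)=m$, so $u$ is a minimizer. In particular $\calE(u)<\infty$, so $\lim_{j\to\pm\infty}u_j=\pm1$ by Lemma~\ref{Rem:MProps}.

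Next I would derive the standing wave equation. The subtlety is that $\calM$ is a convex set with nontrivial boundary (the monotonicity constraints $u_j\leq u_{j+1}$ and the bounds $-1\leq u_j\leq 1$), so a priori minimizing only yields a variational inequality: for every admissible direction $v$, i.e. every $v$ such that $u+tv\in\calM$ for small $t>0$, one has $\sum_j v_j\,\calG(u)_j\geq0$, using the G\^ateaux derivative formula \eqref{Rem:MProps.Eqn1} (after checking that finitely supported $v$ are enough and that $\calG(u)\in\ell^\infty$ pairs with $\ell^1$). To promote this to the equation $\calG(u)_j=0$ for all $j$, I would argue that the constraints are not active at a minimizer. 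First, because $F$ is positive and strictly so near $\pm1$ under \eqref{Intro:Result1.Eqn1}, one cannot have $u_j=1$ for finite $j$ (this would cost infinite or at least non-optimal energy, or can be strictly decreased by pulling that coordinate down), and likewise $u_j>-1$ for all finite $j$; so the box constraints are inactive at every finite site. Second, and this is the heart of the matter, I must rule out flat spots $u_j=u_{j+1}$. The idea: if $u$ has a plateau, perturb by spreading the plateau slightly — this is an admissible two-sided perturbation — and use the strict convexity of $\calD$ together with the sign of $F'$ on $(-1,1)$ (which points "outward", i.e. $F'(\eta)<0$ for $\eta\in(0,1)$ is false in general, so instead one uses that a plateau at height $c\in(-1,1)$ cannot be a critical configuration of the one-dimensional discrete problem $F'(u_j)=2\beta\Delta u_j$). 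A clean way: show directly that any minimizer satisfies the variational inequality, then observe that if $u_{j_0}=u_{j_0+1}$ for some $j_0$ then one can find an admissible variation strictly decreasing $\calE$ — decreasing $u_{j_0}$ and all $u_j$ with $j\leq j_0$ by a small amount $t$ keeps $u\in\calM$ (oddness is preserved by simultaneously raising $u_{-j}$, or one works with the symmetric variation), changes $\calD$ to first order by a term involving the now-nonzero slope it creates, and changes $\calF$ by $t\sum_{j\leq j_0}F'(u_j)+o(t)$; a careful bookkeeping shows the net first-order change can be made negative unless all the relevant $\calG(u)_j$ vanish, contradicting the plateau. Once $\calG(u)\equiv0$, equation \eqref{Eqn:StandingWave} holds, and then strict monotonicity follows from the equation itself by a discrete maximum-principle / unique-continuation argument: if $u_{j_0}=u_{j_0+1}$ then the equation at $j_0$ forces a relation among $u_{j_0-1},u_{j_0},u_{j_0+1}$ that, combined with monotonicity, propagates equality to all neighbours and hence $u\equiv\mathrm{const}$, contradicting $u_{\pm\infty}=\pm1$.

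I expect the main obstacle to be exactly this interplay between the boundary of the constraint set $\calM$ and the Euler--Lagrange equation: getting from "variational inequality on a convex constrained set" to "the standing wave equation holds at every site" requires showing the monotonicity constraint is nowhere active, and the natural perturbations must respect oddness and monotonicity simultaneously, so the admissible directions form a restricted cone. The cleanest route is probably to prove strict monotonicity of the minimizer \emph{first}, by a direct competitor argument (replace a flat piece by a genuinely increasing piece of equal "mass" and show $\calE$ strictly drops, using convexity of $\calD$ and $F\geq0$), and only afterwards derive the equation — since once $u$ is in the interior of $\calM$ in the relevant sense, arbitrary finitely supported $v$ (symmetrized to preserve oddness) are admissible in both directions and the variational inequality becomes the equation $\calG(u)\equiv0$. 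The exponential convergence claimed in Theorem~\ref{Intro:Result1}, together with the identification $\sigma=\Psi'(x_\infty)$ (here normalized to $1$), then comes from linearizing \eqref{Eqn:StandingWave} about $\pm1$: the linearization has characteristic equation $2\beta(\mu-2+\mu^{-1})=F''(1)=2\Psi''(1)>0$, whose roots are off the unit circle precisely when $\Psi''(1)>0$, giving the exponential tail; this part is routine and would be deferred to \S\ref{sec:Waves.2} as indicated in the excerpt. So in this theorem I would prove only: (i) existence of a minimizer via direct method, (ii) strict monotonicity via a competitor/perturbation argument, (iii) the standing wave equation via the now-unconstrained first variation, leaving the decay rate to the next subsection.
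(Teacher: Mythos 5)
Your proposal follows essentially the same route as the paper's proof: existence by the direct method (weak$\star$ compactness of $\calM$ plus the lower semicontinuity from Lemma \ref{Rem:MProps}), then strict monotonicity by showing the box and monotonicity constraints are inactive via admissible local perturbations evaluated with the first-variation formula \eqref{Rem:MProps.Eqn1}, and finally $\calG\at{u}_j=0$ for all $j$ once two-sided variations are available. For the plateau step the paper uses exactly your first ``spread the plateau'' variation $v_{j_1}=-1$, $v_{j_2}=+1$, where the on-site terms cancel because both sites sit at the same height and the coupling term alone gives the strictly negative first variation $\calG\at{u}_{j_2}-\calG\at{u}_{j_1}=-\beta\bat{\at{u_{j_1}-u_{j_1-1}}+\at{u_{j_2+1}-u_{j_2}}}<0$ (no convexity of $\calD$ or sign condition on $F^\prime$ is needed), so your alternative of lowering all $u_j$ with $j\le j_0$ should be discarded (it pushes the tail below $-1$ and hence out of $\calM$), while, as you rightly note, oddness is respected by symmetrizing each perturbation at $\pm j$.
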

\begin{proof}
The existence of minimizers follows from the weak$\star$ compactness of $\calM$ and the lower
semi-continuity of $\calE$. We now show by contradiction that each minimizer $u\in\calM$ is
strictly increasing. Suppose at first that there exists some $j_0\in{Z}$ such that
\begin{align*}
u_{j_0-1}<1\quad\text{and}\quad{u_j=1}
\quad\text{for all}\quad%
{j\in{Z}}\quad\text{with}\quad{j\geq{j_0}},
\end{align*}
and define $v\in\ell^1\at{Z}$ by $v_{j_0}=-1$ and $v_j=0$ for all $j\neq{j_0}$. Then we have
\begin{align*}
\calG\at{u}_{j_0}=\Psi^\prime\at{1}1-1-\beta\at{1+u_{j_0-1}-2}=\beta\at{1-u_{j_0-1}}>0.
\end{align*}
and \eqref{Rem:MProps.Eqn1} yields $\calE\at{u+tv}<\calE\at{u}$ for all sufficiently small
$t>0$, which is the desired contradiction. Consequently, we have $u_j<1$ for all $j\in{Z}$
and $-1<u_j$ follows similarly. Secondly, suppose there exist a constant plateau between $-1$
and $+1$, that means
\begin{align*}
u_{j_1-1}<u_{j_1}=u_{j_1+1}=\tdots=u_{j_2-1}=u_{j_2}<u_{j_2+1}
\end{align*}
for some indices $j_1<j_2$, and define $v\in\ell^1\at{Z}$ by $v_{j_1}=-1$, $v_{j_2}=+1$, and
$v_j=0$ for all $j\in{Z}\setminus\{j_1,\,j_2\}$.  This implies
\begin{align*}
\calG\at{u}_{j_1}=\Psi^\prime\at{u_{j_1}^2}u_{j_1}-u_{j_1}+\beta\at{u_{j_1}-u_{j_1-1}}>
\Psi^\prime\at{u_{j_2}^2}u_{j_2}-u_{j_2}+\beta\at{u_{j_2}-u_{j_2+1}}=\calG\at{u}_{j_2},
\end{align*}
and \eqref{Rem:MProps.Eqn1} gives $\calE\at{u+tv}<\calE\at{u}$, which is again a
contradiction. Finally, since $u$ is strictly increasing we can perturb the $u_j$'s
independently from each other, so \eqref{Rem:MProps.Eqn1} yields $\calG\at{u}_j=0$ for all
$j\in{Z}$.
\end{proof}
Recall that Theorem \ref{Theo:ExistenceOfWaves} provides the existence of standing waves in
both the on-site and off-site setting.
%
%
\subsection{Exponential tails for standing waves}\label{sec:Waves.2}
%
%
We next show that standing waves converge exponentially to the asymptotic states.
Heuristically, the decay rate $\la$ is determined by linearizing \eqref{Eqn:StandingWave} in
the asymptotic states, i.e., $\la$ is the unique positive solution to
\begin{align}
\label{Eqn:DecayRate}
4\beta\at{\cosh{\la}-1}=F^{\prime\prime}\at{1}=4\Psi^{\prime\prime}\at{1}.
\end{align}
From this we conclude that $\Psi^{\prime\prime}\at{1}>0$ is truly necessary for standing
waves to have exponential tails.
\begin{lemma}
%
Let $u\in\calM$ be any strict monotone solution to \eqref{Eqn:StandingWave} and choose
$\ul\la$, $\ol{\la}$ such that $0<\ul\la<\la<\ol{\la}$ with $\la$ as in
\eqref{Eqn:DecayRate}. Then there exist constants $\ul{c}$ and $\ol{c}$ such that
\begin{align*}
{\ul{c}}\exp\at{-\ol{\la}\abs{j}}\leq\abs{\sgn{j}-u_j}\leq{\ol{c}}\exp\at{-\ul{\la}\abs{j}}
\end{align*}
for all $j\in{Z}$.
\end{lemma}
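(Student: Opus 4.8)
The plan is to work on the right branch $j \to +\infty$ (the case $j\to-\infty$ follows by the symmetry $u_{-j}=-u_j$) and to treat the defect $w_j := 1 - u_j > 0$, which by Theorem~\ref{Theo:ExistenceOfWaves} is a strictly decreasing positive sequence tending to $0$. The strategy is a standard sub-/super-solution comparison for the linear three-term recurrence, but the nonlinearity has to be linearized carefully near the asymptotic state. First I would rewrite the standing wave equation \eqref{Eqn:StandingWave} in terms of $w_j$: since $F'(u_j)=F'(1)-F''(1)w_j + o(w_j)=-F''(1)w_j+o(w_j)$ (using $F'(1)=0$ and $F\in C^2$), the equation becomes
\begin{align*}
2\beta\at{w_{j+1}+w_{j-1}-2w_j} = F''(1)\,w_j + \rho_j,
\end{align*}
where the remainder satisfies $|\rho_j| \le \epsilon_j\, w_j$ with $\epsilon_j\to 0$ as $j\to\infty$, because $F''$ is continuous at $1$ and $u_j\to 1$. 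Fix $j_0$ large enough that $|\epsilon_j|$ is as small as we like for $j\ge j_0$.

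The second step is the comparison argument itself. For the upper bound, observe that the function $j\mapsto \exp(-\ul\la j)$ satisfies $2\beta(\cosh\ul\la - 1) < 2\Psi''(1) = \tfrac12 F''(1)$ by the strict monotonicity of $\cosh$ and the choice $\ul\la<\la$; hence $\phi_j := \ol c\,\exp(-\ul\la j)$ is a \emph{supersolution} of the linearized recurrence with a little room to spare, i.e.
\begin{align*}
2\beta\at{\phi_{j+1}+\phi_{j-1}-2\phi_j} \le \bat{F''(1)+\epsilon}\phi_j
\end{align*}
for $j\ge j_0$, provided $\epsilon$ is chosen small (which fixes how large $j_0$ must be). Choosing $\ol c$ large enough that $\phi_{j_0}\ge w_{j_0}$ and $\phi_{j_0-1}\ge w_{j_0-1}$, a discrete maximum-principle / induction argument on $j\ge j_0$ shows $w_j\le\phi_j$ for all such $j$; enlarging $\ol c$ once more absorbs the finitely many indices $|j|<j_0$. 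The lower bound is symmetric: $\psi_j := \ul c\,\exp(-\ol\la j)$ with $\ol\la>\la$ is a \emph{subsolution}, $2\beta(\cosh\ol\la-1) > \tfrac12 F''(1)$, and after choosing $\ul c$ small so that $\psi_{j_0},\psi_{j_0-1}$ lie below $w_{j_0},w_{j_0-1}$ (here one uses $w_j>0$ strictly, from strict monotonicity) the same induction gives $w_j\ge\psi_j$. Taking the smaller of the two constants obtained on the two branches and combining with the $\sgn j$ normalization yields the stated two-sided bound.

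The one technical point that needs care — and is the main obstacle — is the passage from the comparison \emph{inequalities} to the pointwise domination of $w$ by $\phi$ (resp.\ domination of $\psi$ by $w$). The recurrence is second order, so one cannot simply induct forward from a single index; instead I would argue that $d_j := \phi_j - w_j$ satisfies $2\beta(d_{j+1}+d_{j-1}-2d_j) \le (F''(1)+\epsilon) d_j$ for $j\ge j_0$ with $d_{j_0-1},d_{j_0}\ge 0$, and then show that $d_j$ cannot become negative: if it did, let $j_1$ be the first index with $d_{j_1}<0$; since $d_j\to 0$ as $j\to\infty$ (both $\phi_j\to0$ and $w_j\to0$), $d$ attains a negative minimum at some $j_2\ge j_1$, where $d_{j_2+1}+d_{j_2-1}-2d_{j_2}\ge 0$ while the right-hand side $(F''(1)+\epsilon)d_{j_2}<0$, a contradiction. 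This "discrete minimum principle on a half-line with decay at $+\infty$" is what makes the argument close; everything else is bookkeeping with constants. Finally one should note that $F''(1)=2\Psi''(1)>0$ is exactly hypothesis \eqref{Intro:Result1.Eqn1}$_1$, so \eqref{Eqn:DecayRate} indeed has a unique positive root $\la$ and the interval $(\ul\la,\ol\la)$ around it is nonempty.
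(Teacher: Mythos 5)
Your argument is correct, but it takes a genuinely different route from the paper. The paper also reduces to $j>0$ and sets $w_j=1-u_j$, but then it studies the ratios $\ka_j=w_j/w_{j-1}\in\oointerval{0}{1}$: the mean value theorem turns \eqref{Eqn:StandingWave} into $w_{j+1}+w_{j-1}-2w_j=\delta_j w_j$ with $\delta_j=F^{\prime\prime}\at{\xi_j}/\at{2\beta}\to F^{\prime\prime}\at{1}/\at{2\beta}$, so the ratios obey the Riccati-type recursion $\ka_{j+1}=2+\delta_j-1/\ka_j$, which is shown to converge to the stable fixed point $\ka_\infty=\mhexp{-\la}$ with $\la$ exactly the root of \eqref{Eqn:DecayRate}; the two-sided bound then follows by telescoping the products of ratios once $\mhexp{-\ol\la}<\ka_j<\mhexp{-\ul\la}$. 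Your approach instead keeps the perturbed linear recurrence for $w_j$ itself and compares with explicit exponential barriers via a discrete minimum principle on the half-line, using $w_j\to0$ and the decay of the barriers to guarantee that a putative negative minimum of the difference is attained — which is precisely the point that makes the comparison close, and you identify it correctly. The trade-off: the paper's ratio argument pinpoints the asymptotic decay rate directly (the ratio $w_j/w_{j-1}$ converges to $\mhexp{-\la}$), at the price of an implicit dynamical argument for the convergence of the $\ka_j$; your barrier argument avoids analyzing the ratio dynamics and is more routine, at the price of some bookkeeping with $\eps$ and $j_0$. One small correction to that bookkeeping: for the upper bound the equation only gives $2\beta\at{w_{j+1}+w_{j-1}-2w_j}\geq\at{F^{\prime\prime}\at{1}-\eps}w_j$ for $j\geq j_0$, so the supersolution $\phi$ and the difference $d_j=\phi_j-w_j$ must be compared against the coefficient $F^{\prime\prime}\at{1}-\eps$ rather than $F^{\prime\prime}\at{1}+\eps$ as written; since $4\beta\at{\cosh\ul\la-1}<F^{\prime\prime}\at{1}$ strictly, this is available for $\eps$ small and the minimum principle goes through unchanged because the coefficient stays positive.
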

\begin{proof}
By virtue of $u_{-j}=-u_{j}$ it is sufficient to consider positive $j$. For $1<j\in{Z}$ we
define $w_j$ and $\kappa_j$ by $w_j=1-u_j$ and $\ka_j={w_j}/w_{j-1}$. By $u\in\calM$ and
\eqref{Eqn:StandingWave} we have
\begin{align*}
0<{w_j}<{w_{j-1}}<1,\quad%
\delta_jw_j=w_{j+1}+w_{j-1}-2w_j,\quad\delta_j=\frac{F^{\prime\prime}\at{\xi_j}}{2\beta},
\end{align*}
where $\xi_j$ denotes some intermediate value in $\ccinterval{u_j}{1}$, and hence
\begin{align*}
0<{\ka_j}<1
,\quad%
\kappa_{j+1}=2+\delta_j-\frac{1}{\kappa_j}.
\end{align*}
Moreover, using $w_j\to0$ as $j\to\infty$ we find
\begin{align*}
\delta_j\xrightarrow{j\to\infty}\delta_\infty=\frac{F^{\prime\prime}\at{1}}{2\beta},
\quad\kappa_j\xrightarrow{j\to\infty}\kappa_\infty
=%
\frac{2+\delta_\infty-\sqrt{\delta_\infty}\sqrt{4+\delta_\infty}}{2},
\end{align*}
and a direct computation reveals that $\la=-\ln\at{\kappa_\infty}>0$ satisfies
\eqref{Eqn:DecayRate} and $\ul\la<\la<\ol\la$. The desired result now follows immediately.
\end{proof}
We mention that \eqref{Eqn:DecayRate} implies the expansions
\begin{align*}
\la=-\ln\at{\delta^{-1}}+2\delta^{-1}\at{1+\DO{\delta^{-1}}}
,\quad%
\la=\sqrt{\delta}\at{1+\DO{\delta}}
,\quad%
\delta=\frac{F^{\prime\prime}\at{1}}{2\beta}.
\end{align*}
In particular, we have $\la\to\infty$ in the anti-continuum limit $\beta\to0$ but $\la\to0$
in the continuum limit $\beta\to\infty$.
%
%
\subsection{Ritz approximation of standing waves}\label{sec:Waves.3}
%
%
It is reasonable, and useful for numerical simulations, to approximate the standing waves
from Theorem \ref{Theo:ExistenceOfWaves} by minimizing $\calE$ on the finite-dimensional set
\begin{align}
\label{Eqn:RitzSet}
\calM_N=\{u\in\calM\;:\;u_j=\sgn{j}
\;\;\;\forall\;j\in{Z}\;\text{with}\;\abs{j}>N\},\quad{N}\in\Nset.
\end{align}
Notice that $\calE$ attains its minimum on $\calM_N$ as $\calM_N$ is closed under weak$\star$
convergence in $\ell^\infty\at{Z}$.
\begin{lemma}
%
We have
\begin{align}
\label{Lem:ConvergenceMinima.Eqn1}
\min\calE|_{\calM_N}\xrightarrow{N\to\infty}\min\calE|_{\calM}.
\end{align}
Moreover, each sequence $\at{u_N}_{N\in\Nset}$ with $u_N\in\calM_N$ and
$\calE\at{u_N}=\min\calE|_{\calM_N}$ for all $N$, has a subsequence that converges pointwise
to some $u$ with $\calE\at{u}=\min\calE|_{\calM}$.
\end{lemma}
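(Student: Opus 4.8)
The plan is to prove the two assertions of the lemma essentially at once, using the nesting $\calM_1\subset\calM_2\subset\cdots\subset\calM$ together with the fact established in Lemma~\ref{Rem:MProps} that $\calE$ is weakly$\star$ lower semi-continuous and non-negative on the weakly$\star$ compact set $\calM$. First I would note that $\calM_N\subset\calM_{N+1}\subset\calM$ immediately gives that $m_N:=\min\calE|_{\calM_N}$ is a non-increasing sequence bounded below by $m:=\min\calE|_{\calM}$, hence convergent to some limit $m_\infty\geq m$. The work is to show $m_\infty\leq m$, and simultaneously that minimizing sequences $(u_N)$ have a pointwise-convergent subsequence whose limit is a genuine minimizer of $\calE$ on $\calM$; the latter will force $m_\infty=m$.

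For the upper bound on $m_\infty$ I would fix a minimizer $u^\ast\in\calM$ of $\calE$ (which exists by Theorem~\ref{Theo:ExistenceOfWaves}) and construct an explicit competitor in $\calM_N$ by truncation: set $u^\ast_N$ to agree with $u^\ast$ for $\abs{j}\leq N$ and to equal $\sgn j$ for $\abs{j}>N$. Monotonicity and oddness of $u^\ast$ guarantee $u^\ast_N\in\calM_N$ (one only needs $u^\ast_N\leq u^\ast_{N+1}$ at the join $j=N$, which holds since $u^\ast_N\leq 1 = u^\ast_{N+1}$). Then $\calE(u^\ast_N)-\calE(u^\ast)$ consists only of the finitely many tail terms of $\calF$ and $\calD$ for $\abs{j}\geq N$, plus a boundary correction in $\calD$ at $j=N$; since $\calE(u^\ast)<\infty$ forces $j\mapsto u^\ast_j-\sgn j\in\ell^2(Z)$ and $F(u^\ast_j)$ summable, the tail sums $\sum_{\abs j\geq N}F(u^\ast_j)$ and $\sum_{\abs j\geq N}(u^\ast_{j+1}-u^\ast_j)^2$ go to $0$, and the single modified difference $(1-u^\ast_N)^2$ also goes to $0$. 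Hence $m_N\leq\calE(u^\ast_N)\to\calE(u^\ast)=m$, giving $m_\infty\leq m$ and therefore $m_\infty=m$, which is \eqref{Lem:ConvergenceMinima.Eqn1}.

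For the second assertion, let $(u_N)$ be any sequence of minimizers $u_N\in\calM_N$. By weak$\star$ compactness of $\calM$ (equivalently, by a diagonal argument over the countably many coordinates, since every $u_N$ takes values in $[-1,1]$) pass to a subsequence converging pointwise to some $u\in\calM$. Weak$\star$ lower semi-continuity of $\calE$ (via Fatou, exactly as in the proof of Lemma~\ref{Rem:MProps}) gives $\calE(u)\leq\liminf_N\calE(u_N)=\liminf_N m_N=m$. Since $u\in\calM$, also $\calE(u)\geq m$, so $\calE(u)=m=\min\calE|_{\calM}$, as claimed.

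I expect the only point requiring a little care — the ``main obstacle,'' though it is mild — is verifying that the truncated competitor $u^\ast_N$ genuinely lies in $\calM_N$ and that the energy difference $\calE(u^\ast_N)-\calE(u^\ast)$ really is a tail quantity that vanishes. This rests on the summability consequences of $\calE(u^\ast)<\infty$, which were already recorded in the introduction (using $F(\pm1)=F'(\pm1)=0$, the map $j\mapsto u^\ast_j-\sgn j$ is in $\ell^2$ and $\calF(u^\ast),\calD(u^\ast)$ are finite), so no new estimate is needed; one just has to write the finitely many changed terms and invoke tail convergence of a convergent series. Everything else is soft: monotone convergence of $m_N$, the truncation upper bound, and the lower-semicontinuity lower bound.
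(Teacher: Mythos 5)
Your proposal is correct and follows essentially the same route as the paper: the truncated competitor $u^\ast_N$ together with the nesting $\calM_{N}\subset\calM_{N+1}\subset\calM$ gives \eqref{Lem:ConvergenceMinima.Eqn1}, and weak$\star$ compactness plus the lower semi-continuity from Lemma \ref{Rem:MProps} handles the subsequence claim, exactly as in the paper (which phrases the upper bound slightly more directly via $\calF\at{u^\ast_N}\leq\calF\at{u^\ast}$ and $\calD\at{u^\ast_N}\leq\calD\at{u^\ast}+2\at{1-u^\ast_N}^2$). Only a wording slip: the discarded $\calF$- and $\calD$-terms for $\abs{j}>N$ are infinitely many, but since they are tails of convergent series (and $u^\ast_N\to1$ by Lemma \ref{Rem:MProps}), your conclusion stands unchanged.
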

\begin{proof}
Let $u\in\calM$ be a minimizer of $\calE$, and define the sequence $\at{u_N}_N$ by
$u\dblidx{N}{j}=u_j$ for $\abs{j}\leq{N}$ and $u\dblidx{N}{j}=\sgn{j}$ for $\abs{j}>N$. Then
we have
\begin{align*}
\calF\at{u_N}=\sum\limits_{\abs{j}\leq{N}}F\at{u_j}\leq\calF\at{u}
\end{align*}
and
\begin{align*}
\calD\at{u_N}
=%
\sum\limits_{\abs{j}
\leq%
{N}-1}\at{u_{j+1}-u_{j}}^2+2\at{1-u_N}^2\leq\calD\at{u}+2\at{1-u_N}^2.
\end{align*}
Taking the limit $N\to\infty$ gives
\begin{align*}
\lim_{N\to\infty}\min\calE|_{\calM_N}
\leq%
\lim_{N\to\infty}\calE\at{u_N}\leq\calE\at{u}=\min\calE|_{\calM}.
\end{align*}
This implies \eqref{Lem:ConvergenceMinima.Eqn1} because we have
$\calM_{N_1}{\subset}\calM_{N_2}{\subset}\calM$ and hence
$\min\calE|_{\calM}\leq\min\calE|_{\calM_{N_2}}\leq\min\calE|_{\calM_{N_1}}$ for all
$N_1<N_2$. Finally, suppose that $\at{u_N}_N$ satisfies $\calE\at{u_N}=\min\calE|_{\calM_N}$.
By weak$\star$ compactness we can extract a (not relabelled) subsequence that converges
weakly$\star$ in $\ell^\infty\at{Z}$ to some limit $u\in\calM$.  Lemma \ref{Rem:MProps}
combined with \eqref{Lem:ConvergenceMinima.Eqn1} gives
\begin{align*}
\min\calE|_{\calM}
\leq%
\calE\at{u}
\leq%
\lim\limits_{N\to\infty}\calE\at{u_{N}}
=\lim\limits_{N\to\infty}\min\calE|_{\calM_{N}}=\min\calE|_{\calM},
\end{align*}
so $u$ is in fact a minimizer of $\calE$.
\end{proof}
%
%
%
%
\subsection{Continuum limit of standing waves}\label{sec:Waves.Limit}
%
%
In this section we characterize the continuum limit of discrete standing waves. To this end
we consider profile functions $u=u\at{\xi}\in\calM$, where $\xi\in\Rset$ is continuous
variable and $\calM$ is given by
\begin{align*}
\calM=\left\{
u\in\fspaceL^\infty\at{\Rset}\;:\;-1\leq{u\at{\xi_1}}\leq{u\at{\xi_2}}=-u\at{-\xi_2}\leq1
\;\;\;\forall\;\xi_1\leq\xi_2\in\Rset
\right\}.
\end{align*}
The set $M$ is a naturally related to the formal limit problem \eqref{Eqn:LimitProblem} as
standard arguments for planar Hamiltonian ODEs imply the following result.
\begin{remark}
\label{Rem:Cont.LimitSol}
There exists a unique $u\in\calM$ with $F^\prime\at{u}=2\beta{u}^{\prime\prime}$.
\end{remark}

We now introduce a small lattice parameter $\eps>0$ and consider the functional $\calE_\eps$
in $\calM$ defined by
\begin{align*}
\calE_\eps\at{u}=\calF\at{u}+\beta\calD_\eps\at{u}
,\quad
\calF\at{u}=\int\limits_{\Rset}F\at{u\at{\xi}}\dint\xi
,\quad
\calD_\eps\at{u}=\int\limits_{\Rset}\at{\frac{u\at{\xi+\eps}-u\at{\xi}}{\eps}}^2\dint\xi.
\end{align*}
The connection with the discrete setting becomes obvious when introducing
\begin{align*}
\calM_\eps=\left\{
u\in\calM\;:\;u\at{\eps{j}}=u\at{\eps{j}+\eps\xi}
\;\;\;\forall\;j\in{Z},\;\xi\in\oointerval{-\tfrac{1}{2}}{\tfrac{1}{2}}\in\Rset
\right\}.
\end{align*}
Each $u\in\calM_\eps$ is piecewise constant and can be identified with a increasing and odd
profile from $\ell^\infty\at{Z}$. In particular, we have
\begin{align*}
\calF\at{u}=\eps\sum\limits_{j\in{Z}}F\at{u\at{\eps{j}}}
,\quad%
\calD_\eps\at{u}
=%
\frac{1}{\eps}\sum\limits_{j\in{Z}}\at{u\at{\eps{j}+\eps}-u\at{\eps{j}}}^2.
\end{align*}
The result from \S\ref{sec:Waves.1} imply that $\calE_\eps$ attains its minimum on
$\calM_\eps$.
\begin{lemma}
\label{Lem:Cont.ExistenceOfWaves}
For each $\eps>0$ there exist a minimizer $u_\eps\in\calM_\eps$ for
$\calE_\eps|_{\calM_\eps}$ such that
\begin{align}
\label{Lem:Cont.ExistenceOfWaves.Eqn1}
\int\limits_{\Rset}F^\prime\at{u_\eps\at\xi}\varphi\at\xi\dint\xi
=%
2\beta\int\limits_{\Rset}
u_\eps\at\xi\frac{\varphi\at{\xi+\eps}+\varphi\at{\xi+\eps}-2\varphi\at{\xi}}{\eps^2}
\dint\xi
\end{align}
holds for all test functions $\varphi\in\fspaceC^\infty_0\at{\Rset}$. Moreover, there exists
a constant $C$ such that $\calE_\eps\at{u_\eps}\leq{C}$ for all $0<\eps\leq1$.
\end{lemma}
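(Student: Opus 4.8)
The plan is to prove Lemma~\ref{Lem:Cont.ExistenceOfWaves} by transferring the discrete machinery from \S\ref{sec:Waves.1}--\S\ref{sec:Waves.3} to the scaled continuous setting, and then to establish the uniform bound via an explicit test sequence. First I would observe that, via the identification of $\calM_\eps$ with increasing odd sequences in $\ell^\infty\at{Z}$ spelled out just above the statement, minimizing $\calE_\eps$ on $\calM_\eps$ is \emph{literally} the discrete minimization problem of Theorem~\ref{Theo:ExistenceOfWaves} up to the prefactors $\eps$ and $1/\eps$ (which do not affect existence of a minimizer): the set $\calM_\eps$ is weak$\star$ compact and $\calF,\calD_\eps$ are weak$\star$ lower semi-continuous by the same Fatou argument as in Lemma~\ref{Rem:MProps}. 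Hence a minimizer $u_\eps\in\calM_\eps$ exists.

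Next I would derive the Euler--Lagrange equation \eqref{Lem:Cont.ExistenceOfWaves.Eqn1}. Here one cannot perturb freely within $\calM_\eps$ by an arbitrary test function (the competitors are piecewise constant), so I would instead argue exactly as in the proof of Theorem~\ref{Theo:ExistenceOfWaves}: by the plateau-elimination arguments there, $u_\eps$ is strictly increasing as a sequence, so each nodal value $u_\eps\at{\eps j}$ can be varied independently, giving $\calG_\eps\at{u_\eps}_j=0$ for all $j$, i.e.\ the scaled discrete standing wave equation $F^\prime\at{u_\eps\at{\eps j}}=2\beta\,\eps^{-2}\bat{u_\eps\at{\eps(j+1)}+u_\eps\at{\eps(j-1)}-2u_\eps\at{\eps j}}$. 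Then, given $\varphi\in\fspaceC^\infty_0\at{\Rset}$, I would multiply this identity by $\eps\,\varphi\at{\eps j}$, sum over $j$, and recognize each side as a Riemann sum for the corresponding integral in \eqref{Lem:Cont.ExistenceOfWaves.Eqn1}: the left side converges to $\int_\Rset F^\prime\at{u_\eps\at\xi}\varphi\at\xi\dint\xi$ because $u_\eps$ is piecewise constant and $F^\prime$ bounded on $\ccinterval{-1}{1}$, while on the right I would use summation by parts to move the discrete Laplacian onto $\varphi$ and again identify the limit; alternatively one checks directly that the piecewise-constant integrand on each side already equals the stated integral exactly. (I would double-check the typo in \eqref{Lem:Cont.ExistenceOfWaves.Eqn1}, where the second numerator term should read $\varphi\at{\xi-\eps}$.)

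The last and genuinely substantive step is the uniform bound $\calE_\eps\at{u_\eps}\leq C$ for $0<\eps\leq1$. Since $u_\eps$ is a minimizer, it suffices to exhibit \emph{one} competitor $v_\eps\in\calM_\eps$ with $\calE_\eps\at{v_\eps}$ bounded independently of $\eps$. The natural choice is to sample the continuum heteroclinic $u_\star\in\calM$ from Remark~\ref{Rem:Cont.LimitSol}: set $v_\eps$ piecewise constant with $v_\eps\at{\eps j}=u_\star\at{\eps j}$. Then $\calF\at{v_\eps}=\eps\sum_j F\at{u_\star\at{\eps j}}$ is a Riemann sum for $\int_\Rset F\at{u_\star}\dint\xi<\infty$ (finite because $u_\star$ has exponential tails and $F\at{\pm1}=F^\prime\at{\pm1}=0$, so $F\at{u_\star\at\xi}$ decays exponentially), hence bounded for small $\eps$; monotonicity lets one bound it by a fixed multiple of the integral for all $\eps\le 1$. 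For the Dirichlet term, $\calD_\eps\at{v_\eps}=\eps^{-1}\sum_j\at{u_\star\at{\eps(j+1)}-u_\star\at{\eps j}}^2$; writing the increment as $\int_{\eps j}^{\eps(j+1)}u_\star^\prime\,\dint\xi$ and applying Cauchy--Schwarz gives $\at{u_\star\at{\eps(j+1)}-u_\star\at{\eps j}}^2\le\eps\int_{\eps j}^{\eps(j+1)}\babs{u_\star^\prime}^2\dint\xi$, so $\calD_\eps\at{v_\eps}\le\int_\Rset\babs{u_\star^\prime}^2\dint\xi<\infty$ uniformly in $\eps$ (again finite because $u_\star^\prime$ decays exponentially, as $u_\star$ solves the planar ODE \eqref{Eqn:LimitProblem}). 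I expect this Dirichlet estimate — specifically, justifying $u_\star\in H^1\at\Rset$ with exponentially decaying derivative — to be the main point requiring care; everything else is bookkeeping with Riemann sums and the already-established discrete theory. Combining, $\calE_\eps\at{u_\eps}\le\calE_\eps\at{v_\eps}\le\calF_{\max}+\beta\int_\Rset\babs{u_\star^\prime}^2\dint\xi=:C$, which completes the proof.
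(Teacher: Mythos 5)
Your proposal is correct, but for the one step the paper actually writes out in detail --- the uniform bound $\calE_\eps\at{u_\eps}\leq C$ --- you take a genuinely different route. The paper never touches the continuum heteroclinic: it tests with the explicit clamped ramp $v_\eps\in\calM_\eps$, $v_\eps\at{\eps j}=\eps j$ for $\eps\abs{j}<1$ and $v_\eps\at{\eps j}=\sgn{j}$ otherwise, and computes directly $\calF\at{v_\eps}=\int_{-1}^{1}F\at{\eta}\dint\eta+\DO{\eps}$ and $\calD_\eps\at{v_\eps}=2+\DO{\eps}$; this needs nothing beyond boundedness of $F$ on $\ccinterval{-1}{1}$. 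Your competitor, obtained by sampling $u_\star$ from Remark~\ref{Rem:Cont.LimitSol}, works as well, but the finiteness of $\int_\Rset F\at{u_\star}\dint\xi$ and $\int_\Rset\abs{u_\star^\prime}^2\dint\xi$ (the point you rightly flag as delicate) is obtained most cleanly from the first integral $\beta\at{u_\star^\prime}^2=F\at{u_\star}$, which gives $\int_\Rset\abs{u_\star^\prime}^2\dint\xi=\beta^{-1/2}\int_{-1}^{1}\sqrt{F\at\eta}\dint\eta<\infty$, rather than from exponential tails; note also that $F\circ u_\star$ need not be monotone on each half-line when $\Psi$ is non-convex, so the Riemann-sum error for $\calF\at{v_\eps}$ should be controlled by bounded variation or decay rather than ``monotonicity''. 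The existence and Euler--Lagrange parts coincide with the paper, which simply invokes Theorem~\ref{Theo:ExistenceOfWaves} with $\beta/\eps^2$ and declares \eqref{Lem:Cont.ExistenceOfWaves.Eqn1} the weak form of \eqref{Eqn:StandingWave}; one caution there: \eqref{Lem:Cont.ExistenceOfWaves.Eqn1} is an exact identity for each fixed $\eps$ (pair the nodal equations with the cell integrals $\int_{\eps j-\eps/2}^{\eps j+\eps/2}\varphi\dint\xi$ and sum by parts), so you should rely on your ``exact check'' variant rather than the Riemann-sum limit phrasing, which would only yield the identity up to $\eps$-dependent errors. You are also right that the second numerator term in \eqref{Lem:Cont.ExistenceOfWaves.Eqn1} should read $\varphi\at{\xi-\eps}$.
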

\begin{proof}
The existence of $u_\eps$ follows from Theorem \ref{Theo:ExistenceOfWaves} (with
$\beta/\eps^2$ instead of $\beta$), and \eqref{Lem:Cont.ExistenceOfWaves.Eqn1} is the weak
formulation of the discrete standing wave equation, see \eqref{Eqn:StandingWave}. We now
consider $v_\eps\in\calM_\eps$ with
\begin{align*}
v_\eps\at{\eps{j}}=\left\{
\begin{array}{ll}
-1&\text{for all $j\in{Z}$ with $\eps{j}\leq-1$},
\\%
+1&\text{for all $j\in{Z}$ with $\eps{j}\geq+1$},
\\%
\eps{j}&\text{for all $j\in{Z}$ with $\eps\abs{j}<1$}.
\end{array}
\right.
\end{align*}
A direct computation shows
\begin{align*}
\calF\at{v_\eps}=\sum\limits_{\eps\abs{j}<1}\eps{}F\at{\eps{j}}
=\int\limits_{-1}^{1}{F}\at{\eta}\dint\eta+\DO{\eps},\quad
\calD_\eps\at{v_\eps}=\sum\limits_{\eps\abs{j}\leq{1}}\eps={2}+\DO{\eps}.
\end{align*}
The uniform bounds for $\calE_\eps\at{u_\eps}$ now follows due to
$\calE_\eps\at{u_\eps}\leq\calE_\eps\at{v_\eps}$.
\end{proof}
We are now able to pass to the continuum limit $\eps\to0$.
\begin{corollary}
We have $u_\eps\to{u}$ as $\eps\to0$ in $\fspaceL^\infty_\loc\at{\Rset}$, where $u_\eps$ and
$u$ are defined in Lemma \ref{Lem:Cont.ExistenceOfWaves} and Remark \ref{Rem:Cont.LimitSol},
respectively.
\end{corollary}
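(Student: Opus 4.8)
The plan is to establish compactness of the family $\{u_\eps\}_{0<\eps\le1}$, extract a convergent subsequence, identify its limit as a solution of the limit ODE, and then invoke the uniqueness from Remark~\ref{Rem:Cont.LimitSol} to upgrade subsequential convergence to full convergence. First I would note that each $u_\eps$ lies in $\calM$, so the family is uniformly bounded in $\fspaceL^\infty\at{\Rset}$ by $1$ and each $u_\eps$ is monotone increasing and odd. Monotonicity is the crucial structural feature: a uniformly bounded family of monotone functions on $\Rset$ is precompact in $\fspaceL^1_\loc$ (equivalently $\fspaceL^\infty_\loc$ up to the usual arguments, using that the limit is again monotone hence has at most countably many discontinuities) — this is a Helly-type selection principle. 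So I can extract a (not relabelled) subsequence and a limit $u\in\calM$ with $u_\eps\to u$ pointwise at every continuity point of $u$, and in $\fspaceL^\infty_\loc$ away from the jump set; in particular convergence holds in $\fspaceL^1_\loc$ on any bounded interval.

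Next I would pass to the limit in the weak equation \eqref{Lem:Cont.ExistenceOfWaves.Eqn1}. Fix $\varphi\in\fspaceC^\infty_0\at{\Rset}$ supported in some bounded interval $I$. On the right-hand side, the second difference quotient $\eps^{-2}\at{\varphi\at{\xi+\eps}+\varphi\at{\xi-\eps}-2\varphi\at{\xi}}$ converges uniformly to $\varphi''\at{\xi}$ and is uniformly bounded with support in a fixed slightly larger interval; combined with $u_\eps\to u$ in $\fspaceL^1_\loc$ this gives $2\beta\int u_\eps\,(\text{2nd diff.q.})\dint\xi\to 2\beta\int u\,\varphi''\dint\xi$. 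On the left-hand side, $F'$ is continuous on $\ccinterval{-1}{1}$ hence bounded, and $F'\at{u_\eps}\to F'\at{u}$ pointwise a.e.\ on $I$ by continuity of $F'$ and a.e.\ convergence of $u_\eps$; dominated convergence yields $\int F'\at{u_\eps}\varphi\dint\xi\to\int F'\at{u}\varphi\dint\xi$. Hence $u$ satisfies $\int F'\at{u}\varphi\dint\xi=2\beta\int u\,\varphi''\dint\xi$ for all $\varphi\in\fspaceC^\infty_0$, i.e.\ $F'\at{u}=2\beta u''$ in the distributional sense. Elliptic regularity (here just the one-dimensional bootstrap: $u''$ is an $\fspaceL^\infty$ function, so $u\in W^{2,\infty}_\loc$, and then $F'\at{u}$ is continuous, so $u\in C^2$) makes this a classical solution, and since $u\in\calM$ it has the correct asymptotic behaviour $u\at\xi\to\pm1$. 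By Remark~\ref{Rem:Cont.LimitSol} this $u$ is \emph{the} unique element of $\calM$ with this property.

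Finally, the uniqueness of the limit promotes the subsequential convergence to convergence of the whole family: if $u_\eps\not\to u$ in $\fspaceL^\infty_\loc$, there would be a sequence $\eps_n\to0$ and a bounded interval $I$ with $\norm{u_{\eps_n}-u}_{\fspaceL^\infty\at I}\ge\delta>0$, but the above argument applied to this sequence produces a further subsequence converging in $\fspaceL^\infty_\loc$ (or at least in $\fspaceL^1_\loc$, which for uniformly monotone functions converging to a continuous limit is the same thing on $I$) to the same $u$, a contradiction; here one uses that the ODE solution $u$ is continuous, so local $\fspaceL^1$ convergence of monotone functions to it is in fact local uniform convergence.

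The main obstacle is the passage to the limit being clean enough to land in $\calM$ with the full heteroclinic asymptotics rather than a degenerate limit. Concretely, one must rule out that the profiles $u_\eps$ "spread out" so that the limit is, say, identically $\pm1$ away from $0$ with a single jump, or conversely flatten to the constant $0$. Monotonicity and oddness confine the limit to $\calM$, but the uniform energy bound $\calE_\eps\at{u_\eps}\le C$ from Lemma~\ref{Lem:Cont.ExistenceOfWaves} is what genuinely prevents both pathologies: it bounds $\calF\at{u_\eps}=\int F\at{u_\eps}\dint\xi$, and since $F>0$ on $\oointerval{-1}{1}$ this forces $u_\eps$ to reach values near $\pm1$ (else $\calF$ diverges as the monotone profile stays in a region where $F$ is bounded below) while the $\calD_\eps$ part controls oscillation; passing this quantitative information through the limit — e.g.\ via Fatou to get $\int F\at{u}\dint\xi\le\liminf\int F\at{u_\eps}\dint\xi\le C<\infty$, which combined with monotonicity of $u$ gives $u\at\xi\to\pm1$ — is the step requiring care.
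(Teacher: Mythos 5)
Your proof is correct, and it reaches the conclusion by a genuinely different compactness route than the paper. You get precompactness of $\at{u_\eps}$ from monotonicity alone (Helly selection, giving pointwise and $\fspaceL^1_\loc$ convergence along a subsequence), pass to the limit in the weak identity \eqref{Lem:Cont.ExistenceOfWaves.Eqn1} using the uniform convergence of the discrete second difference of $\varphi$ to $\varphi''$, and only then recover continuity and smoothness of the limit from the distributional ODE $2\beta u''=F'\at{u}$ by a one-dimensional bootstrap; continuity of the limit plus monotonicity of the $u_\eps$ then upgrades the convergence to $\fspaceL^\infty_\loc$. The paper instead uses the uniform energy bound twice: the bound on $\calD_{\eps_n}$ makes the difference quotients $v_n$ bounded in $\fspaceL^2\at\Rset$, so after extracting weak limits it identifies $v_\infty$ as the weak derivative of the weak$\star$ limit $u_\infty$, obtaining continuity of $u_\infty$ and locally uniform convergence directly from the integral representation $u_n\at\xi=\int_0^\xi v_n+\DO{\eps_n}$, without any regularity theory for the limit equation. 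Both arguments then coincide in the remaining steps: Fatou together with $\calF\at{u_\eps}\leq C$ and $F>0$ on $\oointerval{-1}{1}$ forces the heteroclinic asymptotics $u\at\xi\to\pm1$ (you correctly flag this as the step that rules out degenerate limits, and it is exactly where the paper also uses the energy bound a second time), and uniqueness from Remark \ref{Rem:Cont.LimitSol} promotes subsequential to full convergence. What your route buys is economy of hypotheses for compactness (monotonicity suffices; the $\calD_\eps$ bound is only needed for the asymptotics), at the price of invoking the regularity bootstrap and the Pólya-type ``monotone plus continuous limit implies locally uniform'' argument; the paper's route buys the continuity and $\fspaceL^\infty_\loc$ convergence in one stroke from the $\fspaceL^2$ bound on difference quotients, which is also the more natural statement if one wants to track the energy in the limit.
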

\begin{proof}
Let $\at{\eps_n}_n\subset\Rset$ be any sequence with $\eps_n\to0$, and let
$u_n=u_{\eps_n}\in\calM_{\eps_{n}}$. By weak$\star$ compactness in
$\fspaceL^\infty\at{\Rset}$ we can extract a (not relabelled) subsequence that converges
weakly$\star$ to some limit $u_\infty\in\calM$. Lemma \ref{Lem:Cont.ExistenceOfWaves}
guarantees that the sequence $\at{v_n}_n$ with
\begin{align*}
v_n\at\xi=\frac{u_n\at{\xi+\eps_n}-u_n\at\xi}{\eps_n}
\end{align*}
is bounded in $\fspaceL^2\at{\Rset}$, and passing to a further subsequence we can assume that
$v_n$ converges weakly in $\fspaceL^2\at{\Rset}$ to some limit $v_\infty$. We then have
\begin{align*}
0=\int\limits_{\Rset}v_{n}\at\xi\varphi\at\xi\dint{\xi}+
\int\limits_{\Rset}u_n\at\xi\frac{\varphi\at{\xi}-\varphi\at{\xi-\eps_n}}{\eps_n}\dint{\xi}
\quad\quad\forall\;\varphi\in\fspaceC^\infty_0\at{\Rset},
\end{align*}
and passing to the limit $n\to\infty$ we conclude that $v_\infty$ is the weak derivative of
$u_\infty$. In particular, $u_\infty$ is continuous and satisfies
\begin{align*}
u_\infty\at{\xi}=\int\limits_{0}^{\xi}v_\infty\at{\zeta}\dint{\zeta}=
\lim\limits_{n\to\infty}\int\limits_{0}^{\xi}v_n\at{\zeta}\dint{\zeta}
=\lim\limits_{n\to\infty}u_n\at{\xi},
\end{align*}
where we used that $u_n\at{\xi}=\int_{0}^{\xi}v_n\at{\zeta}\dint{\zeta}+\DO{\eps_n}$ holds by
construction. Therefore, $u_n$ converges to $u_\infty$ in $\fspaceL^\infty_\loc\at{\Rset}$,
so \eqref{Lem:Cont.ExistenceOfWaves.Eqn1} implies that $u_\infty$ and $u$ solves the same
ODE. Moreover, Lemma \ref{Lem:Cont.ExistenceOfWaves} combined with Fatou's Lemma provides
\begin{align*}
\calF\at{u}\leq\liminf\limits_{n\to\infty}\calF\at{u_n}<\infty,
\end{align*}
which implies $u_\at\xi\to\pm1$ as $\xi\to\pm\infty$ and hence $u_\infty=u$. We have now
shown that each sequence $\at{u_{\eps_n}}_n$ has at least a subsequence that converges to
$u$. The uniqueness of $u$ now implies the desired result.
\end{proof}
%
%
%
\section{Numerical simulations}\label{sec:num}
%
We illustrate the analytical results from the previous section by numerical simulations of
 discrete waves. To this end we implemented the explicit Euler-scheme for the
gradient flow of $\calE$ in the Ritz set $\calM_N$ from \eqref{Eqn:RitzSet}. The
corresponding iteration scheme with flow time $\tau>0$ reads
\begin{align*}
u\in\calM_N\mapsto\calI\at{u}\in\calM_N,
\end{align*}
with
\begin{align*}
\calI\at{u}_j=u_j-\tau{}\at{F^\prime\at{u_j}-2\beta\at{u_{j+1}+u_{j-1}-2u_{j}}}
\quad\forall\;\abs{j}\leq{N}.
\end{align*}
\begin{figure}[t!]
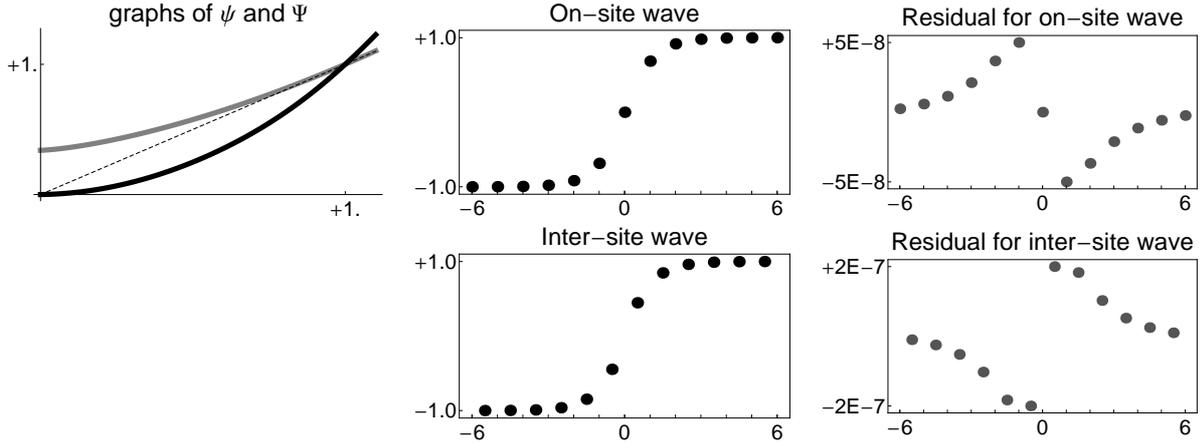
%
\centering{%
\begin{minipage}[c]{0.3\textwidth}%
\includegraphics[width=\textwidth, draft=\figdraft]%
{\figfile{ex_1_graph}}%
\end{minipage}%
\hspace{0.025\textwidth}%
\begin{minipage}[c]{0.3\textwidth}%
\includegraphics[width=\textwidth, draft=\figdraft]%
{\figfile{ex_1_wave_1}}%
\end{minipage}%
\hspace{0.025\textwidth}%
\begin{minipage}[c]{0.3\textwidth}%
\includegraphics[width=\textwidth, draft=\figdraft]%
{\figfile{ex_1_res_1}}%
\end{minipage}%
\\%
\begin{minipage}[c]{0.3\textwidth}%
\quad
\end{minipage}%
\hspace{0.025\textwidth}%
\begin{minipage}[c]{0.3\textwidth}%
\includegraphics[width=\textwidth, draft=\figdraft]%
{\figfile{ex_1_wave_2}}%
\end{minipage}%
\hspace{0.025\textwidth}%
\begin{minipage}[c]{0.3\textwidth}%
\includegraphics[width=\textwidth, draft=\figdraft]%
{\figfile{ex_1_res_2}}%
\end{minipage}%
}%
\caption{%
On-site and inter-site waves for $\beta=0.25$ after $150$ steps with $\tau=0.1$ and $N=6$.
The upper left picture shows the graphs of $\psi$ (Black) and $\Psi$ (Gray),
see \eqref{Eqn:DefLittePsi}.
}%
\label{Fig:NumEx1}%
\end{figure}%
\begin{figure}[t!]
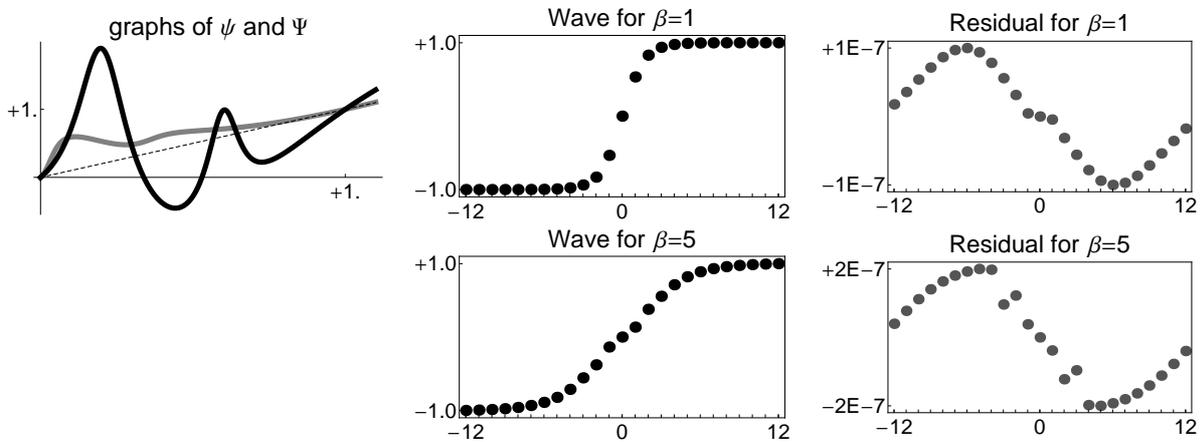
%
\centering{%
\begin{minipage}[c]{0.3\textwidth}%
\includegraphics[width=\textwidth, draft=\figdraft]%
{\figfile{ex_2_graph}}%
\end{minipage}%
\hspace{0.025\textwidth}%
\begin{minipage}[c]{0.3\textwidth}%
\includegraphics[width=\textwidth, draft=\figdraft]%
{\figfile{ex_2_wave_1}}%
\end{minipage}%
\hspace{0.025\textwidth}%
\begin{minipage}[c]{0.3\textwidth}%
\includegraphics[width=\textwidth, draft=\figdraft]%
{\figfile{ex_2_res_1}}%
\end{minipage}%
\\%
\begin{minipage}[c]{0.3\textwidth}%
\quad%
\end{minipage}%
\hspace{0.025\textwidth}%
\begin{minipage}[c]{0.3\textwidth}%
\includegraphics[width=\textwidth, draft=\figdraft]%
{\figfile{ex_2_wave_2}}%
\end{minipage}%
\hspace{0.025\textwidth}%
\begin{minipage}[c]{0.3\textwidth}%
\includegraphics[width=\textwidth, draft=\figdraft]%
{\figfile{ex_2_res_2}}%
\end{minipage}%
}%
\caption{%
On-site waves for $\beta=1$ and $\beta=5$ after $600$ steps with $\tau=0.01$ and $N=12$.
}%
\label{Fig:NumEx2}%
\end{figure}%
The iteration scheme works very well in numerical simulations and converges to a fixed point
of $\calI$ provided that $\tau$ is sufficiently small. It is clear, at least for large $N$,
that each fixed point of $\calI$ satisfies the standing wave equation up to high order, but
it may happen that this fixed point approximates a \emph{local} minimizer of $\calE$.
\par
To
ensure that the scheme converges to a global minimizer we need a good guess for the initial
profile. In our simulations we always start with the shock profile $u_j=\sgn{j}$ for all $j$,
which is the global minimizer for $\beta=0$.
\par
Typical numerical solutions for convex and non-convex $\Psi$ are shown in Figure
\ref{Fig:NumEx1} and \ref{Fig:NumEx2}, respectively. In the upper left picture we also
plotted the graph of
\begin{align}
\label{Eqn:DefLittePsi}%
\psi\at{\eta}=\Psi^\prime\at{\eta^2}\eta,
\end{align}
which is the nonlinearity in the standing wave equation \eqref{Eqn:StandingWave}. Notice that
the first example is prototypical for power laws $\psi\at{\eta}=\tfrac{1}{1+d}\eta^{1+d}$
with $d>1$.
\begin{figure}[t!]
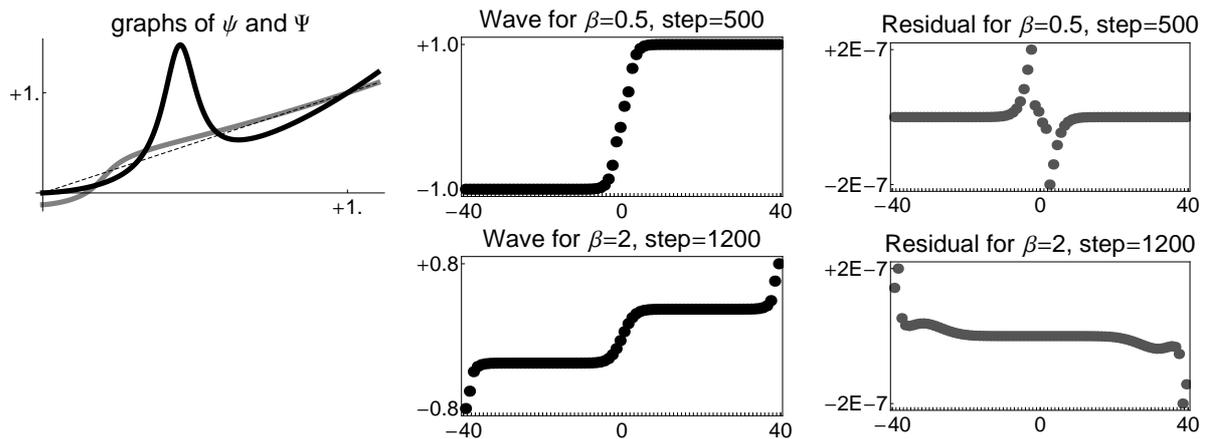
%
\centering{%
\begin{minipage}[c]{0.3\textwidth}%
\includegraphics[width=\textwidth, draft=\figdraft]%
{\figfile{ex_3_graph}}%
\end{minipage}%
\hspace{0.025\textwidth}%
\begin{minipage}[c]{0.3\textwidth}%
\includegraphics[width=\textwidth, draft=\figdraft]%
{\figfile{ex_3_wave_1}}%
\end{minipage}%
\hspace{0.025\textwidth}%
\begin{minipage}[c]{0.3\textwidth}%
\includegraphics[width=\textwidth, draft=\figdraft]%
{\figfile{ex_3_res_1}}%
\end{minipage}%
\\%
\begin{minipage}[c]{0.3\textwidth}%
\quad%
\end{minipage}%
\hspace{0.025\textwidth}%
\begin{minipage}[c]{0.3\textwidth}%
\includegraphics[width=\textwidth, draft=\figdraft]%
{\figfile{ex_3_wave_2}}%
\end{minipage}%
\hspace{0.025\textwidth}%
\begin{minipage}[c]{0.3\textwidth}%
\includegraphics[width=\textwidth, draft=\figdraft]%
{\figfile{ex_3_res_2}}%
\end{minipage}%
}%
\caption{%
Off-site waves for $\beta=.5$ and $\beta=2$ after $1000$ steps with $N=40$ and $\tau=0.05$.
}%
\label{Fig:NumEx3}%
\end{figure}%
\begin{figure}[t!]
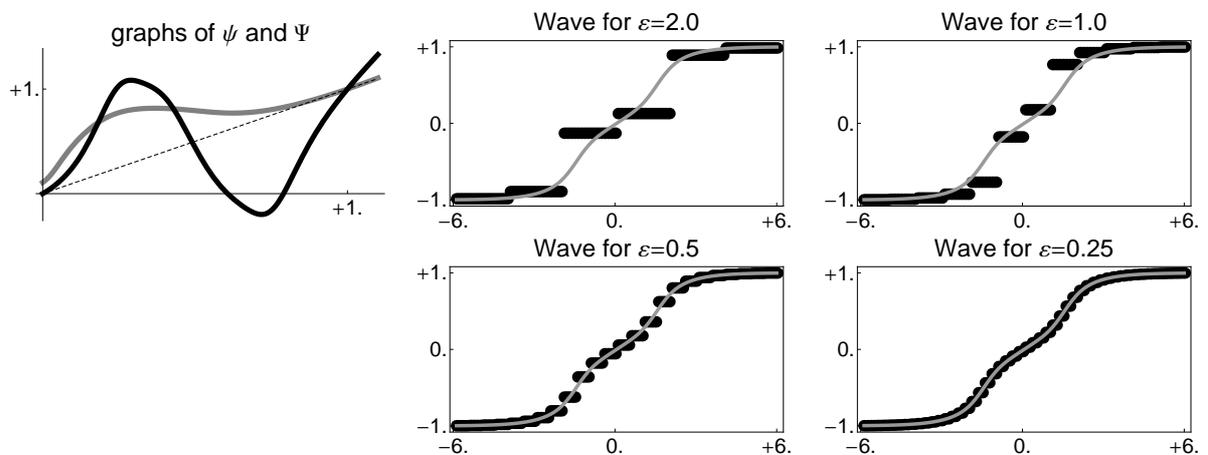
%
\centering{%
\begin{minipage}[c]{0.3\textwidth}%
\includegraphics[width=\textwidth, draft=\figdraft]%
{\figfile{ex_4_graph}}%
\end{minipage}%
\hspace{0.025\textwidth}%
\begin{minipage}[c]{0.3\textwidth}%
\includegraphics[width=\textwidth, draft=\figdraft]%
{\figfile{ex_4_wave_1}}%
\end{minipage}%
\hspace{0.025\textwidth}%
\begin{minipage}[c]{0.3\textwidth}%
\includegraphics[width=\textwidth, draft=\figdraft]%
{\figfile{ex_4_wave_2}}%
\end{minipage}%
\\%
\begin{minipage}[c]{0.3\textwidth}%
\quad%
\end{minipage}%
\hspace{0.025\textwidth}%
\begin{minipage}[c]{0.3\textwidth}%
\includegraphics[width=\textwidth, draft=\figdraft]%
{\figfile{ex_4_wave_3}}%
\end{minipage}%
\hspace{0.025\textwidth}%
\begin{minipage}[c]{0.3\textwidth}%
\includegraphics[width=\textwidth, draft=\figdraft]%
{\figfile{ex_4_wave_4}}%
\end{minipage}%
}%
\caption{%
Continuum limit of inter-site waves with $\beta=1$: Discrete waves (Black) from $\calM_\eps$
for several values of $\eps$ along with the solution (Gray) of the limit problem
\eqref{Eqn:LimitProblem}.
}%
\label{Fig:NumEx4}%
\end{figure}%
\bigpar
Figures \ref{Fig:NumEx3} illustrates what happens if $\Psi$ violates
\eqref{Intro:Result1.Eqn1}. In this example, $F$ is not positive in $\oointerval{-1}{1}$ but
there exists $0<\eta_\ast<1$ such that
$0>F\at{\eta_\ast}=F\at{-\eta_\ast}=\min_{-1\leq\eta\leq1}{F\at{\eta}}$. Consequently,
$\calE|_\calM$ is unbounded from below and global minimizers $u\in\calM$ cannot exit. In
fact, setting $u_j=\eta_\ast\,\sgn{j}$ for all $\abs{j}\leq{N}$ and $u_j=\sgn{j}$ for
$\abs{j}>N$ we find $\calE\at{u}=N{}F\at{\eta_\ast}+\DO{1}\to-\infty$ as $N\to\infty$. We now
interpret the numerical results in Figure \ref{Fig:NumEx3}. For small $\beta$ (and large $N$)
the iteration scheme converges to a fixed point of $\calI$ that satisfies the standing wave
equation up to high order and has a sharp transition from $-1$ to $+1$. We therefore
conjecture that standing waves with $u_{\pm\infty}=\pm1$ still exist and correspond to local
minimizers of $\calE|_\calM$. For larger values of $\beta$, however, the solutions
$u_N\in\calM_N$ exhibit plateaus at heights $\pm\eta_\ast$ and converge for $N\to\infty$ to
some limit profile $u$ with asymptotic states $u_{\pm\infty}=\pm\eta_{\ast}$. Notice that $u$
solves the standing wave equation \eqref{Eqn:StandingWave} as $F^\prime\at{\eta_\ast}=0$
implies $\Psi^\prime\at{\eta^2_\ast}=1$.
\par
Finally, Figure \ref{Fig:NumEx4} concerns  the continuum limit of standing waves, see
\S\ref{sec:Waves.Limit}. It shows the piecewise constants minimizers $u_\eps\in\calM_\eps$
for different values of $\eps$ in the interval $\abs{\xi}\leq6$, and illustrates that
$u_\eps$ converges for $\eps\to0$ to the unique solution of \eqref{Eqn:LimitProblem}.
%
%
%
%
%
\providecommand{\bysame}{\leavevmode\hbox to3em{\hrulefill}\thinspace}
\providecommand{\MR}{\relax\ifhmode\unskip\space\fi MR }
\providecommand{\MRhref}[2]{%
  \href{http://www.ams.org/mathscinet-getitem?mr=#1}{#2}
}
\providecommand{\href}[2]{#2}

\end{document}